\newcommand{\dsparse}{$d$-sparse }
\newcommand{\dnd}{$(d,n,d)$ }
\newcommand{\sqn}{\sqrt{n}}
\DeclarePairedDelimiter{\ceil}{\lceil}{\rceil}
\newif\ifanonymous
\begin{document}
\title{Matrix Multiplication in the MPC model}
\ifanonymous
%\titlerunning{Abbreviated paper title}
% If the paper title is too long for the running head, you can set
% an abbreviated paper title here
%
\else
\author{Lakshya Joshi \inst{1}\orcidID{0009-0004-6521-9182} \and
Arya Deshmukh\inst{2}\orcidID{0009-0000-3413-0493
} \and
Atharv Chhabra\inst{1}\orcidID{0009-0007-0498-7067} \and
Chetan Gupta\inst{3}\orcidID{0000-0002-0727-160X}}
\authorrunning{Joshi et al.}
% First names are abbreviated in the running head.
% If there are more than two authors, 'et al.' is used.
%

\institute{Squarepoint Technologies, India \and
Microsoft, India
\\
\and
Indian Institute of Technology, Roorkee, India\\
}
\fi
\maketitle              % typeset the header of the contribution
\begin{abstract}
In this paper, we present algorithms to solve matrix multiplication problems in the MPC model. In particular, we consider the problem under various processor/memory constraints in the MPC model and prove the following results. 

\begin{enumerate}
   \item Multiplication of two rectangular matrices of size $d \times n$ and $n \times d$ ( where $d \leq n$) respectively can be done in, 
    \begin{enumerate}
        \item $O(\sqrt{d} + \log_d n)$ rounds with $n$ processors and $\Theta(d)$ memory per processor
        \item $O(\frac{d}{\sqrt{n}})$ rounds with $d$ processors and $\Theta(n)$ memory per processor.
    \end{enumerate}
    \item Multiplication of two rectangular matrices of size $n \times d$ and $d \times n$ (where $d \leq n$) respectively, with $n$ processors of $\Theta(n)$ memory per processor can be done in $O(\frac{d}{\sqrt{n}})$ rounds.
    \item The multiplication of two $d$-sparse matrices (matrices that contain at most $d$-nonzero elements in each row and in each column) with $n$ processors and $\Theta(d)$ memory per processor can be done in $O(d^{0.9})$ rounds.
\end{enumerate} 
\keywords{Distributed and Parallel Algorithms  \and Matrix Multiplication \and MPC model}
\end{abstract}
\section{Introduction}
We are given two matrices $A$ and $B$, and our task is to compute $ C=A \times B$. Matrix multiplication has been extensively studied in the centralized setting, leading to a long line of work on improving the asymptotic complexity of sequential algorithms (see, e.g., \cite{pan1978strassen,coppersmith1987matrix,coppersmith1990matrix,vassilevska2012multiplying,legall2014powers,alman2021refined}). However, these algorithms are not efficiently parallelizable due to the inherent dense communication pattern in them. This limitation has motivated the investigation of matrix multiplication in distributed and parallel computational models, including the clique model, low-bandwidth models, and related frameworks  \cite{le2016further,Censor-HillelLT18,alg-method-congest-cliq2019,censor2021fast,gupta-2022-sparse,gupta-2024-sparse}. Building on this line of research, in this paper we explore matrix multiplication in the Massively Parallel Computation (MPC) model, a framework that captures large-scale parallel computation with restricted memory per machine. We consider the problem $C= A \times B $ in the following three settings: when matrices $A$ and $B$ are (i) rectangular matrices of dimension $d \times n$ and $n \times d$, respectively, (ii) rectangular matrices of dimension $n \times d$ and $d \times n$, respectively and, (iii) $d$- sparse matrices i.e. both are $n \times n$ size matrices such that each matrix contains at most $d$ non-zero elements in each row and each column. For completeness, we provide a brief overview of the MPC model in \cref{subsec:mpcmodel}, followed by a precise description of the computational settings in which we address these three variants of the matrix multiplication problem.

\subsection{MPC Model}
\label{subsec:mpcmodel}
In this paper, we work with the \textit{massively parallel computation} model (MPC) \cite{Howard10}. Given any problem of size (input $+$ output) $m$  words, such that $m$ is too big to store on a single computer, in the MPC model, we assume that the problem is distributed over a fully connected network of computers (or processors). The computation proceeds in synchronous rounds. Processors perform computation locally, and at the end of a round, they exchange their messages. However, unlike some other models of distributed and parallel computation, in the MPC model, we assume that each computer has a limited memory. In particular, we assume that each computer has $\Theta(m^{\delta})$ memory and the number of computers in the network is $\Theta(m^{1-\delta})$, where $0< \delta <1$. Therefore, the total memory of the system is $\Theta(m)$ sufficient to store the given problem. Since each computer has $\Theta(m^{\delta})$ memory, it is allowed to send and receive $O(m^{\delta})$ words in each round. If any processor receives more than $\Theta(m^{\delta})$ words during the course of an algorithm, the algorithm is considered failed. 

\subsection{Setup}
\label{subsec:setup}
We assume that initially, matrices $A$ and $B$ are distributed evenly among all the processors. The structure of the distribution does not matter because we can reach any desired even distribution in one round (even if we start with an uneven distribution). At the end of the algorithm the elements of the output matrix $C$ should also be evenly distributed. We analyse the problem in the following cases. 
\begin{enumerate}[(i)]
  \item $(d,n,d)$: when $A$ and $B$ are rectangular matrices size $d \times n$ and $n \times d$ respectively, where $d \leq n$.
    \item $(n,d,n):A$ and $B$ are of size $n \times d$ and $d \times n$ respectively 
  \item $d$-sparse: $A$ and $B$ are $d$-sparse $n \times n$ matrices, i.e. each row and column of matrices $A$ and $B$ contains at most $d$ non-zero elements. Also we are interested in at most $d$ nonzero entries in each row and each column in $C$
\end{enumerate}

\subsection{Our Results}

\paragraph{\((d,n,d)\):}  
In this case, both the input and output matrices have size \(O(dn)\). Accordingly, it suffices to assume that the total memory available in the system is \(\Theta(dn)\). We analyze this instance under two different processor--memory configurations:  

\begin{enumerate}[(i)]
    \item For $n$ processor each with $\Theta(d)$ memory we design an algorithm with round complexity  $O(\sqrt{d} + \log_d n)$

    \item For $d$ processor each with $\Theta(n)$ memory we design an algorithm with round complexity $(\tfrac{d}{\sqrt{n}})$ 
    rounds.
\end{enumerate}

\paragraph{\((n,d,n)\):}  
Here, the input matrices have size \(O(nd)\), while the output matrix may be as large as \(O(n^2)\), corresponding to an \(n \times n\) matrix. Therefore, we analyse this case under the configuration of \(n\) processors, each with \(\Theta(n)\) memory, and present an algorithm with round complexity $O(\tfrac{d}{\sqrt{n}})$.

\paragraph{\(d\)-sparse case:}  
Finally, we consider the case where both \(A\) and \(B\) are \(n \times n\) matrices, each with at most \(d\) nonzero entries per row and per column. Since we are interested in $d$ elements in the output matrix, the size of both input and output matrices is bounded by \(O(nd)\). Therefore, we analyse this case with $n$ processors each with $\Theta(d)$ memory. The state-of-the-art for this is a trivial algorithm that takes $O(d)$ rounds. We improve that by giving an $O(d^{0.9})$ round algorithm.  

\subsection{High-level Idea of Our Technique}
In rectangular cases, we observed that in order to reduce inter-processor communication, the best case is when all processors try to compute a square submatrix of $C$. Thus, we divide the matrices $A$ and $B$ into smaller square sub-matrices of equal size depending on the local memory of the processor, and assign each submatrix to one of the processors. At the end, each processor needs to produce the corresponding submatrix of $C$.
For the \dsparse case, we adapt the idea of Gupta et al.\cite{gupta-2022-sparse}. We first refine the analysis given in their paper and improve upon their results. Here, we would like to highlight that we improve the bound present in their paper from $O(d^{1.927})$ to $O(d^{1.925})$ for solving $ d$-sparse matrix multiplication in the low bandwidth model. We do not explicitly mention that result here because that result is about the low-bandwidth model, and in this paper, we stick to the MPC model. However, the improved bound can be directly derived from the \cref{lem:sparsemm-improved} that we prove in this paper (by appropriately setting the value of $\epsilon$ for the low bandwidth model). Then we use \cref{lem:sparsemm-improved} to prove our result for $d$-sparse matrix multiplication in the MPC model. In their paper, they show that  \dsparse matrix multiplication is equivalent to processing triangles in a $ d$-degree graph. In other words, processing triangles is equivalent to a product $a_{ij}b_{jK}$, where $a_{ij}$ and $b_{jK}$ are some elements of matrices $A$ and $B$, respectively. They show that given any $d$-degree graph, say $T$, we can divide the graph into two parts such that one part, say $T'$, is \textit{clustered} - it contains layers of subgraphs of the original graph such that in each layer we have disjoint clusters of triangles which are small and dense, and the other part, say $T'$, contains a low number of triangles. Now, in order to process all the triangles, we can first process all the triangles in the clustered part layer-by-layer using square matrix multiplication, then compute all the triangles $T'$ using brute force. We noticed that there was some suboptimality in bring down the unprocessed triangles from $T$ to $T'$. In this paper, we remove that suboptimality and then use that result as a black box to solve $d$-sparse matrix multiplication in the MPC model.

\section{Rectangular Matrix Multiplication}

In this section, we consider the case when matrices $A$ and $B$ are rectangular matrices. There can be many types of instances in rectangular matrix multiplication. In this paper, we focus on two primary cases of type $(d,n,d)$ and $(n,d,n)$ and analyse them separately in subsequent sections.

\subsection{(d,n,d) Matrix Multiplication}
\label{subsec:dnd1}
In the $(d,n,d)$ case, the size of all three matrices is bounded by $nd$; thus, it is sufficient to have a total memory of size  $O(nd)$. We analyse this case under two different memory constraints, two subcases: (i) when there are $n$ processors each with $O(d)$ memory, and (ii) when there are $d$ processors each with $O(n)$, where $d \leq n$. First, we prove the following two lemmas that will help us to analyse both these cases. 

\begin{figure}
\centering
        \includegraphics[width =\textwidth]{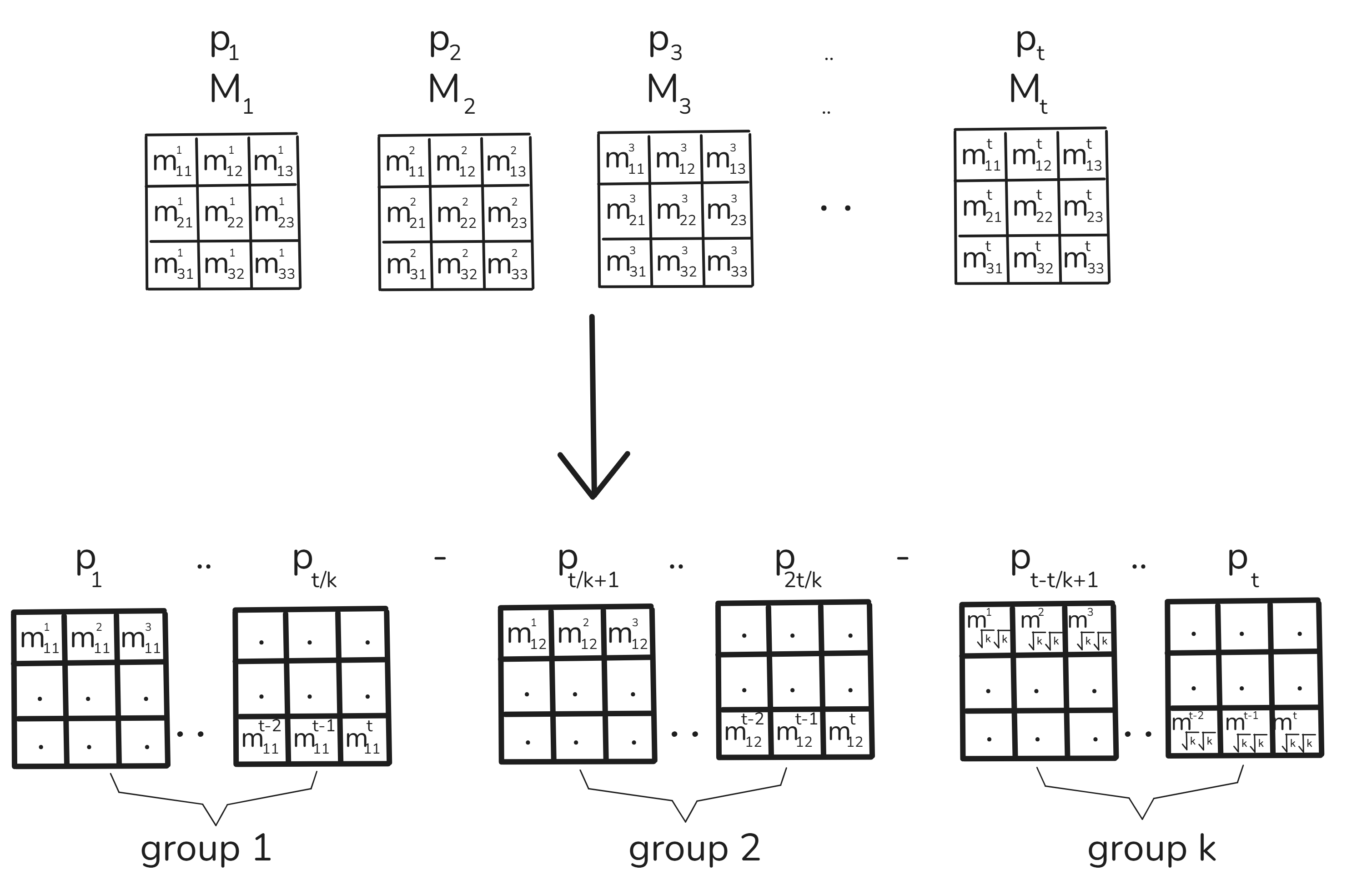}
    \caption{Upper layer represents the initial distribution of elements, and lower layer represents the distribution after the first round}
    \label{fig:mat-dist}
\end{figure}

\begin{lemma}
\label{lem:logkt}
If there are $t$ processors each containing matrices of size $\sqrt{k} \times \sqrt{k}$, and the memory of each processor is $\Theta(k)$, then the sum of these matrices can be computed in $O(\log_k t)$ rounds in the MPC model. 
\end{lemma}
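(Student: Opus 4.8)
The plan is to organize the summation as a $k$-ary reduction tree over the $t$ processors, so that each round (up to a constant factor) reduces the number of surviving partial-sum matrices by a factor of $k$; after $O(\log_k t)$ levels a single matrix, equal to the desired total, remains. The starting observation is that each matrix has exactly $k$ entries, and since each processor can send and receive $O(k)$ scalars per round, a processor can afford to gather $k$ scalars in one round but cannot gather $k$ whole matrices (which would be $k^2$ scalars). So first I would group the $t$ processors into $t/k$ groups of $k$ processors each and design a constant-round subroutine that replaces the $k$ matrices held inside one group by their sum, stored on a single processor of the group.

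The naive implementation of this subroutine — have one processor collect all $k$ matrices and add them — fails, because that processor would receive $k\cdot k = k^2$ scalars, far exceeding its $O(k)$ budget. This memory bottleneck is the only real obstacle, and circumventing it is the crux of the lemma. To get around it I would use a transpose step (illustrated in \cref{fig:mat-dist}). Label the $k$ processors of a group $p_1,\dots,p_k$ and the $k$ entry-positions of a $\sqrt{k}\times\sqrt{k}$ matrix as $1,\dots,k$, and make $p_j$ responsible for position $j$. In one round every $p_i$ sends, for each $j$, the value of its matrix at position $j$ to $p_j$; thus each processor sends $k$ scalars and receives exactly $k$ scalars (the value of its assigned position from all $k$ matrices), staying within $O(k)$ communication and memory. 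Each $p_j$ then adds the $k$ received scalars — valid because semiring addition is associative and commutative — obtaining the $j$-th entry of the group sum. A second round gathers these $k$ single entries onto one designated processor, which again receives only $k$ scalars. Hence each group collapses to a single sum-matrix in $O(1)$ rounds while respecting all MPC constraints.

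Iterating this subroutine gives the round bound: after one level $t/k$ matrices remain, after two levels $t/k^2$, and in general $t/k^r$ after $r$ levels, so we are done once $t/k^r \le 1$, i.e. after $r = O(\log_k t)$ levels. Since each level costs $O(1)$ rounds and the set of active processors only shrinks, the communication and memory bounds continue to hold at every level, for a total of $O(\log_k t)$ rounds. The one point I would verify carefully is that the per-level communication pattern never pushes any processor above its $O(k)$ limit; the regular, fixed structure of the transpose-and-gather step makes this immediate, which is why I would prefer it to the alternative of running $k$ independent scalar reductions in parallel across all processors, where the routing and load-balancing argument is considerably messier.
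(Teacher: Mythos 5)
Your proposal is correct, and it rests on the same key insight as the paper's proof: a processor can receive $O(k)$ scalars per round but not $k$ whole matrices, so the summation must be split entry-wise across processors and then collapsed by a $k$-ary fan-in, giving $O(\log_k t)$ rounds. The organization, however, is dual to the paper's. You partition the $t$ processors into $t/k$ groups of $k$, collapse each group's $k$ matrices into one matrix by a transpose--sum--gather step, and then recurse on the $t/k$ surviving matrices, repeating the transpose at every level of the tree. The paper instead partitions the processors into $k$ groups of $t/k$, one group per entry $(i,j)$ of the output: a single global scatter (its analogue of your transpose; see \cref{fig:mat-dist}) places all $t$ values of entry $(i,j)$ in that group, $k$ values per processor, and from then on the reduction is a plain $k$-ary tree over single scalars (\cref{fig:tree}), with no further transposes. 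The trade-off is minor: your variant maintains the clean invariant that every active processor holds one full matrix, which makes the per-level load bound immediate --- the ``messier routing'' you feared in the entry-parallel alternative is, in the paper's version, actually harmless, since after the initial scatter each sender ships only one scalar per round --- while the paper's variant pays the entry-scattering cost once rather than at every level, saving a constant factor in rounds. Both respect the $O(k)$ communication constraint throughout and both achieve the claimed $O(\log_k t)$ bound.
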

\begin{proof}
    Suppose $M_1,M_2, \ldots, M_t$ are the $t$ matrices and $X = \sum M_l$. Let $m_{ij}^l$ denote the element at $i^{th}$ row and $j^{th}$ column of $M_l$. Therefore $x_{ij} = \sum m_{ij}^l$. We divide the $t$ processors into $k$ groups of $(\frac{t}{k})$ processors such that each group is responsible for computing one $x_{ij}$. Let's just focus on the first group, the same algorithm will be followed in the other groups. Let $p_1,p_2, \ldots, p_{t/k}$ be the processors in the first group. In the first round, all $t$ processor exchange messages such that $p_1$ receives elements $m_{11}^1, m_{11}^2, \ldots  m_{11}^k$; $p_2$ receives elements $m_{11}^{k+1},m_{11}^{k+2}, \ldots m_{11}^{2k}$ and so on (see \cref{fig:mat-dist} ). Now, each $p_i$ computes the sum of all the elements it receives. Notice that all the values that are required to compute $x_{11}$ are present in $t/k$ processors, such that each processor contains only one value (the sum it has calculated). Now, in the next round, processor $p_2,p_3, \ldots p_k$ will send all the intermediate values they have calculated in the previous round to $p_1$. Similarly, $p_{k+2}, p_{k+3}, \ldots p_{2k}$ will send all the intermediate values they have calculated in the previous round to $p_{k+1}$ and so on (see \cref{fig:tree} for reference). Now, notice that all the values that are required to compute $x_{11}$ are now present in the $(\frac{n}{k^2})$ processors, and each contains only one value. Repeat the previous step for $O(\log_k t)$ rounds. After $O(\log_k t)$, the final value of $x_{11}$ will be present in one processor. The same thing will happen in the other groups. Therefore, after $O(\log_k t)$ rounds processors will calculate $X$. This finishes the proof of the lemma. Notice that in the case where $t \leq k$, the algorithm will terminate in merely a single round. Therefore, in the above proof, we can assume $t>k$.
\end{proof}

\begin{figure}
\centering
        \includegraphics[width= \textwidth]{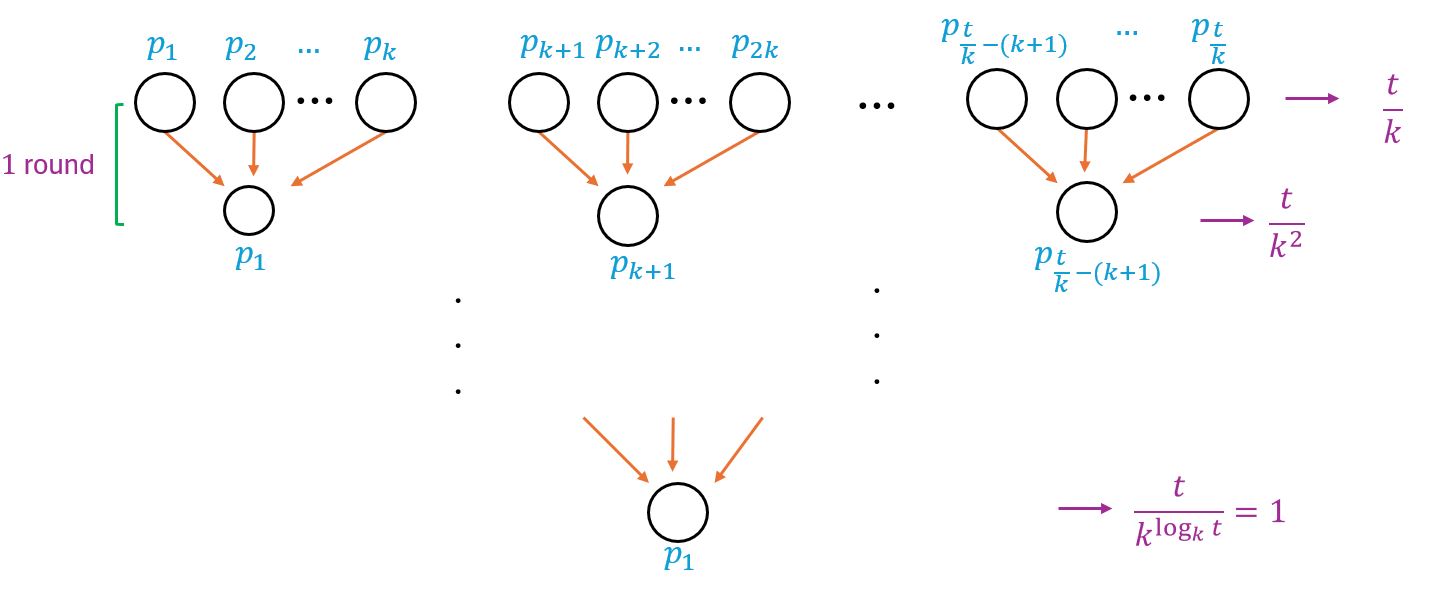}
    \caption{Processors in the first group computing the value of $x_{11}$ }
    \label{fig:tree}
\end{figure}

% \begin{lemma}
% \label{lem:logkt_lower}
% If there are $t$ processors each with $O(k)$ memory such that each processor contains one element, then $\Omega (
% \log_kt)$ rounds are required to compute the sum of these elements in the MPC model, for semirings.
% \end{lemma}

% \begin{proof}
% To compute the sum, the only thing we can do is add the elements locally during a round of internal computation, and then send or receive this local sum(s) in some fashion during a communication round. We try to follow $x$--the number of processors that hold elements yet to be added together to obtain the final sum. Initially, $x$ is equal to $t$, and we need to reduce it to one in order to add all the elements. In one communication round, this $x$ can never decrease by more than a factor of $O(k)$, as one processor can only receive information from $O(k)$ processors during a communication round, due to its memory constraint. Thus, to reduce $x$ to one (which is the same as adding all elements), we will require $\Omega(\log_k t)$  communication rounds.
% \end{proof}

Now we are all set to analyse the $\dnd$ case in the aforementioned settings. First, we discuss the case with $n$ processors with $O(d)$ memory.

\subsubsection{n Processors with $O(d)$ Memory:}
Since $C$ is of size $d \times d$, let us divide $C$ into $d$ sub-matrices of size $\sqrt{d} \times \sqrt{d}$ each, denoted by $C^{ij}$, where $i,j \in [\sqrt{d}]$. We assign $(\frac{n}{d})$ processors to evaluate each sub-matrix. Also divide the matrices $A$ and $B$ into sub-matrices of size $\sqrt{d} \times \sqrt{d}$, each being denoted as $A^{iq}$ and $B^{qj}$ respectively, where $i,j \in [\sqrt{d}]$ and $ q \in [\frac{n}{\sqrt{d}}]$. We know that,
$$C^{ij} = \sum_{q}A^{iq}*B^{qj}$$

Observe that there would be a total of $n$ submatrices of the forms $A^{iq}$ and $B^{qj}$. In one round of communication, let's distribute the input among the $n$ processors such that each processor contains exactly two sub-matrices, namely, $A^{iq}$ and $B^{qj}$ for some values of $i,j,q$ such that the sub-matrices stored by any two processors are distinct. This is possible since storing $A^{iq}$ and $B^{qj}$ requires only $O(d)$ memory. Let us focus on computation on $C^{11}$, other submatrices are computed in the same way. We have assigned $\frac{n}{d}$ processors to compute it. Let $P^{11} = \{p_1, p_2, \ldots p_{\frac{n}{d}} \}$ be the set of processor assigned to $C^{11}$. We set the communication pattern as follows. In round $k$, processor $p_i$ receives submatrices corresponding to $q = (i-1)\sqrt{d}+k$. In other words, in round $1$, it receives matrices matrices $A^{1((i-1)\sqrt{d}+1)}$ and $B^{((i-1)\sqrt{d}+1)1}$, in round 2, it receives $A^{1((i-1)\sqrt{d}+2)}$ and $B^{((i-1)\sqrt{d}+2)1}$ and so on. In each round, $p_i$ computes the product of summatrices received in the current round, and adds the newly computed product to the existing matrix it computed till the previous round using simple matrix addition. It is  easy to observe that after $\sqrt{d}$ rounds, processor $p_i$ will contain matrix $\sum_{k=1}^{\sqrt{d}} A^{1((i-1)\sqrt{d}+k)} * B^{((i-1)\sqrt{d}+k)1}$. Now we have $(\frac{n}{d})$ submatrices distributed over the processors in $P^{11}$, each containing a matrix of size $\sqrt{d} \times \sqrt{d}$ such that the sum of those matrices will produce $C^{11}$. According to \cref{lem:logkt}, $C^{11}$ can be computed in $O(\log_d n)$ rounds. Similarly, the other $C^{ij}$ are also computed in parallel. Thus, matrix $C$ can be computed in $O(\sqrt{d} + \log_d n)$ rounds. The only thing missing in the proof is to show that each processor is sending messages of $O(d)$ size per round. Notice that each element of matrices $A$ and $B$ needs to be sent to at most $\sqrt{d}$ processors. Each processor initially holds $d$ elements of both $A$ and $B$. Thus, it needs to send $d\sqrt{d}$ elements to other processors, which can easily be done in $O(\sqrt{d})$ rounds, such that each processor sends $O(d)$ elements in each round. From this we get the following theorem.

% It is easy to observe that $p_i$ would receive these sub-matrices from a single other processor because of the way in which the initial distribution of $A^{iq}$ and $B^{qj}$ was done among the processors. Thus, the amount of information that a processor ends up sending in any communication round would be $O(d)$.

\begin{theorem}
\label{thm:dnda}
 Multiplication of two rectangular matrices of size $d \times n$ and $n \times d$ respectively, with $n$ processors of $O(d)$ memory can be done $O(\sqrt{d} + \log_d n)$ rounds in the MPC model, where $d \leq n$.
\end{theorem}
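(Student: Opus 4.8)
The plan is to prove Theorem~\ref{thm:dnda} by establishing matching upper and lower bounds of $\Theta(\sqrt{d}+\log_d n)$, reusing the two structural lemmas already available. For the upper bound, I would follow the divide-and-conquer scheme that splits the output matrix $C$ (of size $d \times d$) into $d$ sub-matrices $C^{ij}$ of size $\sqrt{d}\times\sqrt{d}$, and correspondingly partition $A$ and $B$ into $\sqrt{d}\times\sqrt{d}$ blocks $A^{iq}, B^{qj}$ with the inner index $q$ ranging over $[n/\sqrt{d}]$. Since each $C^{ij}=\sum_q A^{iq}*B^{qj}$ is a sum of $n/\sqrt{d}$ block products, I would allocate $n/d$ processors to each $C^{ij}$ (there are $d$ blocks and $n$ processors total), and have each of those processors sequentially accumulate $\sqrt{d}$ of the block products over $\sqrt{d}$ rounds. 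This produces $n/d$ partial sums distributed one per processor, each a $\sqrt{d}\times\sqrt{d}$ matrix.

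The key combinatorial step is then to combine these $n/d$ partial matrices into the final $C^{ij}$. Here I would invoke \cref{lem:logkt} with parameters $t=n/d$ and $k=d$: the lemma says that summing $t$ matrices of size $\sqrt{k}\times\sqrt{k}$ held by $t$ processors of $O(k)$ memory takes $O(\log_k t)=O(\log_d(n/d))=O(\log_d n)$ rounds. Adding the $\sqrt{d}$ rounds of local block-product accumulation to these $O(\log_d n)$ summation rounds gives the claimed $O(\sqrt{d}+\log_d n)$ upper bound. I would also need to verify the communication budget: each element of $A$ and $B$ is routed to at most $\sqrt{d}$ processors, and since a processor initially holds $O(d)$ elements it must send $O(d\sqrt{d})$ elements, which fits in $O(\sqrt{d})$ rounds at $O(d)$ per round, so the routing does not dominate.

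For the lower bound I would argue two separate $\Omega$ bounds and take their maximum. The first, $\Omega(\sqrt{d})$, follows directly from \cref{lem:lowerbound}: a processor with $O(d)$ memory computes at most $O(d\sqrt{d})$ terms per round, so with $n$ processors at most $O(nd\sqrt{d})$ terms are computed per round, and since $d^2 n$ terms are required, at least $\Omega(d^2 n / (nd\sqrt{d})) = \Omega(\sqrt{d})$ rounds are needed. The second, $\Omega(\log_d n)$, follows from \cref{lem:logkt_lower}: by setting all $b_{kj}=1$ (a legitimate semiring instance), computing a single entry $c_{ij}=\sum_{k=1}^n a_{ik}$ reduces to summing $n$ elements which, under the $O(d)$ memory constraint, are initially spread over at least $n/d$ processors, forcing $\Omega(\log_d n)$ rounds. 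Combining these gives the matching lower bound $\Omega(\sqrt{d}+\log_d n)$.

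The main obstacle I anticipate is not the lemma invocations themselves but the bookkeeping that guarantees the entire algorithm stays within the MPC memory and communication constraints simultaneously across all three phases (initial distribution, block-product accumulation, and the logarithmic summation). In particular I would want to check carefully that \cref{lem:logkt} applies with the stated parameters --- that the $n/d$ partial-sum matrices really are distributed exactly one-per-processor with $O(d)$ memory each, matching the lemma's hypothesis --- and that the initial one-round distribution of the $n$ distinct block pairs $(A^{iq}, B^{qj})$ does not cause any processor to receive more than $O(d)$ data. These constraints are what make the simple-looking accumulate-then-sum strategy actually valid in the model, and verifying them is where the real care is required.
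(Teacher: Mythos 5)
Your proposal is correct and follows essentially the same route as the paper's own proof: the same decomposition of $C$ into $d$ blocks of size $\sqrt{d}\times\sqrt{d}$ with $n/d$ processors per block, $\sqrt{d}$ rounds of sequential block-product accumulation, an invocation of \cref{lem:logkt} with $t=n/d$, $k=d$ for the final $O(\log_d n)$ summation, and the identical two-part lower bound via \cref{lem:lowerbound} and \cref{lem:logkt_lower} (with the all-ones $B$ instance). The communication-budget check you flag (each element routed to at most $\sqrt{d}$ processors, hence $O(d\sqrt{d})$ data sent per processor over $O(\sqrt{d})$ rounds) is exactly the verification the paper performs.
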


\subsubsection{d processors with O(n) memory:}
    We handle this case also in a similar way to the previous one, except the division of processors and matrices is slightly different. Without loss of generality, assume that we scale the value of $d$ such that $d := (\ceil{\frac{d}{\sqn}} \sqn)$. Let us divide the matrix $C$ into $(\frac{d^2}{n})$ sub-matrices of size $\sqrt{n} \times \sqrt{n}$ each, denoted by $ C^{ij}$, where $i,j \in [\frac{d}{\sqrt{n}}]$. We assign $(\frac{n}{d})$ processors to evaluate each sub-matrix. Also divide the matrices $A$ and $B$ into sub-matrices of size $\sqrt{n} \times \sqrt{n}$, each being denoted as $A^{iq}$ and $B^{qj}$ respectively, where $i,j \in [\frac{d}{\sqrt{n}}]$ and $ q \in [\sqrt{n}]$. We know that,
$$C^{ij} = \sum_{q}A^{iq}*B^{qj}$$

Observe that there would be a total of $d$ submatrices of the form $A^{iq}$ and $B^{qj}$. In one round of communication, let's distribute the input among the $d$ processors such that each processor contains exactly two sub-matrices, namely, $A^{iq}$ and $B^{qj}$ for some values of $i,j,q$ such that the sub-matrices stored by any two processors are distinct. This is possible since storing $A^{iq}$ and $B^{qj}$ requires only $O(n)$ memory. Let us focus on computation on $C^{11}$, other submatrices are computed in the same way. We have assigned $\frac{n}{d}$ processors to compute it. Let $P^{11} = \{p_1, p_2, \ldots p_{\frac{n}{d}} \}$. In round $k$, processor $p_i$ receives submatrices corresponding to $q = (i-1)\frac{d}{\sqrt{n}}+k$ \textit{i.e.} in round 1, it receives $A^{1((i-1)\frac{d}{\sqrt{n}}+1)}$ and $B^{((i-1)\frac{d}{\sqrt{n}}+1)1}$, in round 2, it receives $A^{1((i-1)\frac{d}{\sqrt{n}}+2)}$ and $B^{((i-1)\frac{d}{\sqrt{n}}+2)1}$ and so on. It is easy to observe that $p_i$ would receive these sub-matrices from a single other processor because of the way in which the initial distribution of $A^{iq}$ and $B^{qj}$ was done among the processors. Thus, the amount of information that a processor ends up sending in any communication round would be $O(n)$. Processor $p_i$ then computes the product of these submatrices locally after each round, adding the newly computed product to the existing product that it computed in the previous round using simple matrix addition. It's easy to observe that after $\frac{d}{\sqrt{n}}$ rounds, processor $p_i$ will contain $\sum_{k=1}^{\frac{d}{\sqrt{n}}} A^{1((i-1)\frac{d}{\sqrt{n}}+k)} * B^{((i-1)\frac{d}{\sqrt{n}}+k)1}$. Now we have $(\frac{n}{d})$ submatrices distributed over the processors in $P^{ij}$, each containing a matrix of size $\sqrt{n} \times \sqrt{n}$ such that the sum of those matrices will produce $C^{11}$. According to \cref{lem:logkt}, $C^{11}$ can be computed in $O(\log_n \frac{n}{d})$ rounds. But since $\frac{n}{d} \leq n$, it can be computed in a constant number of rounds. Similarly, the other $C^{ij}$ are also computed in parallel. Thus, matrix $C$ can be computed in $O(\frac{d}{\sqrt{n}})$ rounds. 

% The lower bound follows from \cref{lem:lowerbound}. There are $d^2n$ terms that need to be calculated. In one round, a total of $O(dn\sqrt{n})$ terms can be calculated. Therefore, $\Omega(\frac{d}{\sqn})$ rounds will be required. We proved the following theorem. 

\begin{theorem}
\label{thm:dndb}
 Multiplication of two rectangular matrices of size $d \times n$ and $n \times d$ respectively, with $d$ processors of $O(n)$ memory can be done in $O(\frac{d}{\sqrt{n}})$ rounds ounds in the MPC model, where $d \leq n$.
\end{theorem}

% size $n \times d$ and $d \times n$,  We shall consider 3 settings, varying the values of $a$, $b$, number of processors and memory per processor in the MPC model. The settings are as follows:
% \begin{enumerate}
%     \item[1.] $a=n$, $b=m$, number of processors = $n$, memory per processor = $n$ ($m \leq n$)
%     \item[2.] 
%         \begin{enumerate}
%             \item[(a)] $a=m$, $b=n$, number of processors = $n$, memory per processor = $m$ ($m \leq n$)
%             \item[(b)] $a=m$, $b=n$, number of processors = $m$, memory per processor = $n$ ($m \leq n$)
%         \end{enumerate}
% \end{enumerate}

\subsection{(n,d,n) Matrix Multiplicaton}
In this case, we assume that we have  $n$ processors with $O(n)$ memory each. Notice that in this case the matrices $A$ and $B$ can be stored in $O(nd)$ memory; however, the resulting $C$ matrix is a square matrix of size $n \times n$. Thus, we limit the total memory of the system to $O(n^2)$. We divide the matrix $C$ into $n$ sub-matrices of size $\sqrt{n} \times \sqrt{n}$ each, denoted by $C^{ij}$ (similar to \cref{subsec:dnd1}) where $i,j \in [\sqrt{n}]$. Assign the computation of $C^{ij}$ to processor $p^{ij}$. Without loss of generality, assume that we scale the value of $d$ such that $d := (\ceil{\frac{d}{\sqn}} \sqn). $ Divide the matrices $A$ and $B$ into sub-matrices of size $\sqrt{n} \times \sqrt{n}$, each being denoted as $A^{iq}$ and $B^{qj}$ respectively, where $ i,\ j \in [\sqrt{n}]$ and $ q \in [\frac{d}{\sqrt{n}}]$. Observe that 
$$C^{ij} = \sum_{q=1}^\frac{d}{\sqrt{n}}A^{iq}*B^{qj}, \text{where $*$ denotes matrix multiplication}$$

We can assume that each processor $p_{ij}$ contains one row and one column of $A$ and $B$ respectively . Note that an element of $A$ needs to be sent to at most $\sqn$ processors. Thus, $p_{ij}$ needs to send $d \sqn$ elements. In one round, it can send $n$ elements and thus, in $\frac{d}{\sqn}$ rounds, it can send all the elements to the destination processors. Now let us analyse how many elements of $A$ and $B$ are received by each processor. $A^{iq}$ and $B^{qj}$ are $\sqrt{n} \times \sqrt{n}$ matrices containing a total of $n$ elements. Therefore, $p_{ij}$ needs a total of $(\frac{d}{\sqn} \cdot n ) = d \sqn$ elements of $A$ and $B$ in order to compute $C^{ij}$. In one round, $p_{ij}$ can receive at most $O(n)$ elements and thus, in $\frac{d \sqn}{n} = \frac{d}{\sqn}$ rounds, it can receive all the elements that it requires for the computation of $C^{ij}$. 

% Hence, we get the following simple algorithm.

% \begin{algorithm}
% \caption{$(n,d,n)$  Matrix Multiplication}
%     \For{each $q$ from $1$ to $\frac{d}{\sqrt{n}}$}{
%         receive elements of $A^{iq}$ and $B^{qj}$ from processors containing them\newline
%         $p_{ij}$ locally computes $C^{ij}$ = $A^{iq}*B^{qj}$
%     }
% \end{algorithm}

\begin{theorem}
\label{thm:ndn}
 Multiplication of two rectangular matrices of size $n \times d$ and $d \times n$ respectively, with $n$ processors of $O(n)$ memory requires $\Theta(\frac{d}{\sqrt{n}})$ rounds, where $d \leq n$.
\end{theorem}

\section{Sparse Matrix Multiplication}
\label{sec:sparsemm}

In this section, we will discuss the case when $A$ and $B$ are $d$-sparse. Notice that in this case, the resulting matrix $C =A*B$ can be $d^2$-sparse. But in the same spirit as discussed in Gupta et al. \cite{gupta-2022-sparse}, in order to keep the model more meaningful, we assume that we are interested in only at most $d$ elements of each row and column of $C$. Therefore, both the input and output can be stored on $O(nd)$ memory. We analyse the case when we have $n$ processors, each with $O(d)$ memory. We assume that initially each processor holds one row and one column (i.e. $O(d)$ elements in total) of $A$ and $B$, and at the end each processor outputs $d$ elements of one of the rows of $C$. We assume that the position of these $d$ elements of $C$ that a processor needs to output is already known to the processor.
% \[
% C \;=\; A \times B,
% \]
% where \(A,B\) are \emph{\(d\)-sparse} matrices: each row (and column) of \(A\) and \(B\) contains at most \(d\) nonzero entries. Here, we only wish to compute a small subset of the elements of $C$. These elements are defined by the \emph{indicator matrix}, which is also \emph{\(d\)-sparse}. 
% The \emph{indicator matrix} is defined as:
% \[
% M_C \;\in\; \{0,1\}^{n\times n},
% \qquad
% (M_C)_{ij} \;=\;
% \begin{cases}
% 1, & C_{ij}\text{ needs to be computed},\\
% 0, & \text{otherwise}.
% \end{cases}
% \]

% We have \(n\) processors, each with \(O(d)\) memory (this is tight, since the total size of the input and output is \(O(nd)\)). 
% Processor \(p_{i}\) initially holds the $i^{th}$ row of matrix $A$, $i^{th}$ row of matrix $B$ and $i^{th}$ row of the indicator matrix. At the end of the algorithm, processor \(p_{i}\) will contain the entries of $i^{th}$ row of matrix $C$ which correspond to value $1$ in $M_C$ . 

% \begin{remark*} 
% From here onwards, let $C$ represent the part of the product matrix that we care about (has nonzero value in $M_C$), and not the potentially \emph{$d^2$-sparse} product matrix.
% \end{remark*}

\paragraph*{A trivial algorithm:} Notice each element of $C$ is the sum of $d$ pairwise products of elements of $A$ and $B$. Thus, there are at most $ nd^2$ products that need to be calculated. In one round, a processor can compute any $d$ products of $C$ that are required to compute an element of $C$. It can do this by obtaining the $d$ elements of matrices $A$ and $B$, then performing $d$ multiplications locally. Thus, in one round of local computation, the $n$ processors can compute any $nd$ terms of $C$. Hence, to compute the $nd^2$ of $C$, we only require $O(d)$ communication rounds. Now, we give a better algorithm than this as follows.

We try to find a more efficient method to compute the terms of $C$. For that, we adapt the idea of Gupta et al. \cite{gupta-2022-sparse}. In particular, we use Lemma 4.2 of their paper. As we discussed earlier, in their paper, they translated the problem of matrix multiplication to triangle processing. So first, we translate their results according to the terminology we used in this. We call each product of type $a_{ij}b_{jk}$ a \textit{term}, where $a_{ij}$ and $b_{jk}$ are elements of $A$ and $B$ respectively. As we mentioned, there are at most $nd^2$ terms that needs to be computed inorder to compute $C$, Gupta et al. \cite{gupta-2022-sparse} computed them in a sequence (or layer) of rounds as follows. 

\begin{lemma}[\cite{gupta-2022-sparse}]
\label{lem:spaa22}
Let \(0 \le \varepsilon_1 < \varepsilon_2\), and assume \(d\) is sufficiently large. Let $T$ denote the remaining number of terms of $C$ that need to be computed such that 
$$|T| = O(d^{2-\varepsilon_1}n).$$
Then, we can compute these terms in $l$ sequential iterations where
$$l = O(d^{5\varepsilon_2-\varepsilon_1})$$
and after these iterations only $T'$ terms remain to be computed, such that
$$|T'| = O(d^{2-\varepsilon_2}n)$$

\end{lemma}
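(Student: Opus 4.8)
The plan is to reduce the statement to a purely graph-theoretic decomposition and then invoke square matrix multiplication as a black box, following the strategy of \cite{gupta-2022-sparse}. First I would make the term/triangle translation explicit in our setting: with the remaining terms $T$ I associate a tripartite graph $G$ on vertex classes $R, K, C$ (rows, pivots, and columns of the product), joining $i \in R$ to $k \in K$ iff $a_{ik} \neq 0$, joining $k \in K$ to $j \in C$ iff $b_{kj} \neq 0$, and joining $i \in R$ to $j \in C$ iff $c_{ij}$ is an output entry we still owe. Since $A$ and $B$ are $d$-sparse and each output row and column carries at most $d$ entries, $G$ has maximum degree $O(d)$, and each term $a_{ik} b_{kj}$ of $T$ is exactly a triangle $(i,k,j)$ of $G$. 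Thus $|T| = O(d^{2-\varepsilon_1} n)$ merely asserts that $G$ has that many triangles, and the goal becomes: carve from $G$'s triangles a \emph{clustered} collection organised into $l$ layers, each a vertex-disjoint union of small dense blocks that a single square matrix multiplication can absorb, while leaving a residual set $T'$ of only $O(d^{2-\varepsilon_2} n)$ triangles.

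Next I would fix a density threshold governed by $\varepsilon_2$ and peel clusters greedily. A cluster is a triple $(R_c, K_c, C_c)$ of vertex subsets whose induced subgraph carries many triangles relative to its size: concretely, rows and columns whose pivot-neighbourhoods concentrate on a common pivot set $K_c$, so that the triangles inside the cluster are precisely the nonzero products of the dense submatrices $A[R_c,K_c]$ and $B[K_c,C_c]$. I would repeatedly extract such above-threshold clusters; within one layer I collect a maximal vertex-disjoint family of them, so that the entire layer is block-diagonal and can be handled by one invocation of the square-matmul routine (each block being small enough to fit a processor's memory). Peeling proceeds layer by layer until no above-threshold cluster survives.

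The heart of the argument is two counting bounds. For the residual, once no dense cluster remains, every surviving demand edge $(i,j)$ must have its common pivot-neighbourhood spread thinly; choosing the threshold so that the surviving triangles number at most a $d^{-\varepsilon_2}$ fraction of the trivial $O(d^2 n)$ bound gives $|T'| = O(d^{2-\varepsilon_2} n)$. For the number of layers, each extracted cluster absorbs a number of triangles comparable to its area, and the vertex-disjointness within a layer limits how many clusters pack into one layer; dividing the triangle budget $O(d^{2-\varepsilon_1} n)$ by the triangles cleared per cluster, together with the packing constraint, yields $l = O(d^{5\varepsilon_2 - \varepsilon_1})$.

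The main obstacle, and essentially the whole content of the lemma, is the simultaneous tuning of four coupled quantities: the cluster dimensions, the density threshold, the number of vertex-disjoint clusters admissible per layer, and the resulting layer count. These trade off against one another, since lowering the threshold shrinks $T'$ but inflates $l$, and every cluster must stay within a single processor's memory for the square-matmul black box to apply. Matching the stated exponents $d^{2-\varepsilon_2} n$ and $d^{5\varepsilon_2 - \varepsilon_1}$ at the same time is exactly the delicate step, and the factor $5$ is an artifact of this balancing; here I would reproduce the accounting of \cite{gupta-2022-sparse}, translating their triangle-listing bookkeeping into the term language used in this paper. Once the decomposition is in place, the remaining per-layer cost argument is routine.
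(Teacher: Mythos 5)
The first thing to say is that the paper itself does not prove this lemma: it is imported, with a change of vocabulary from ``triangles'' to ``terms,'' from Lemma~4.2 of \cite{gupta-2022-sparse}, and the only material the paper adds is the informal gloss following the statement (the decomposition $C = C' + C''$, where $C'$ is a sum of $l$ layer products, each layer consisting of $\frac{n}{d}$ disjoint $d\times d$ block multiplications, and $C''$ is a residual needing only $O(d^{2-\varepsilon_2}n)$ terms). So there is no in-paper proof against which to compare you; the real benchmark is the cited paper. Measured against that, your reconstruction gets the architecture right: the bounded-degree tripartite graph whose triangles are exactly the outstanding terms, the peeling of dense clusters into vertex-disjoint layers so that each layer is a block-diagonal instance suitable for the square-multiplication black box, and a counting argument for the residual once no dense cluster survives. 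This is faithful both to the paper's gloss and to the shape of the cited argument.

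However, as a proof your text has a genuine gap, which you yourself flag: the two estimates that \emph{are} the lemma --- the layer count $l = O(d^{5\varepsilon_2-\varepsilon_1})$ and the residual size $|T'| = O(d^{2-\varepsilon_2}n)$ --- are asserted (``choosing the threshold so that \ldots gives,'' ``together with the packing constraint, yields'') rather than derived, and you then defer exactly this accounting to \cite{gupta-2022-sparse}. Since the paper treats these bounds as a black box too, nothing in your proposal conflicts with the paper; but the proposal should be understood as an annotated citation, not an independent proof. A self-contained argument would have to actually carry out the trade-off you describe only qualitatively: fix the density threshold as an explicit function of $\varepsilon_2$, prove a lower bound on the triangles cleared per layer, and track how the exponent $5\varepsilon_2-\varepsilon_1$ (in particular the factor $5$) falls out --- none of that is present. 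Two smaller inaccuracies are worth noting as well: the lemma is applied to the \emph{remaining} terms partway through the computation, so $T$ is an arbitrary subset of the triangles of $G$ and the peeling argument must be stated for such subsets, not for all triangles of $G$; and in the intended application a $d\times d$ block is not processed inside a single processor's memory but by $d$ processors with $O(d)$ memory each, running the square-matmul routine in $O(\sqrt{d})$ rounds, so ``each block being small enough to fit a processor's memory'' is not the right feasibility condition.
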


Let us try to explain the lemma in simple words. The lemma basically states that the multiplication of two $n \times n$, $d$-sparse matrices ($C = A * B$) that contain  $O(d^{2-\epsilon_1}n)$ \emph{terms} to be computed can be divided into two parts $C'$ and $C''$ such that

\begin{enumerate}
    \item $C' = A_1 \times B_1 + A_2 \times B_2 + \ldots + A_l \times B_l$ ($l$ is same as mentioned in the lemma) where 
    \begin{enumerate}[(i)]
        \item $A_i$  and $B_i$ are submatrices of $A$ and $B$ respectively.
        \item $A_i$  and $B_i$ consist of $(\frac{n}{d})$ submatrices $A_i^j$ and $B_i^j$, for $j \in [\frac{n}{d}]$, respectively. Each $A_i^j$ and $B_i^j$ is of size $d\times d$ such that $A_i \times B_i = \sum A_i^j *B_i^j$.
        
    \end{enumerate}
    
    \item $C'' = A' \times B'$ such that, it requires at most $O(d^{2-\epsilon_2}n)$ terms to compute $C''$

    \item $ C= C' +C''$
\end{enumerate}

Thus, the total time required to compute $C$ is the time to compute $C'$ and $C''$ one after another. The idea is to use square matrix multiplication as a black box algorithm to compute $C'$ (because computing $A_i$ and $B_i$ requires multiplication of many $d \times d$ square matrices) and use a brute force algorithm to compute $C''$. 

To compute $C'$, we can run $l$ sequential iterations, each for one $A_i \times B_i$. In one iteration, we can assign $d$ processors for the computation of $A_i^j *B_i^j$ for each $j$. These $(\frac{n}{d})$ computations will happen in parallel in one iteration. In this paper, we improve the bound on the number of iterations ($l$) required to go from $T$ to $T'$. Notice that, in the MPC model, the time required for one iteration is $O(\sqrt{d})$. Because this involves $\frac{n}{d}$ instance matrix multiplications of $d \times d$ matrices happening in parallel, each with the help of $d$ computers. Thus, if we plug $n=d$ in \cref{thm:ndn}, we get an $O(\sqrt{d})$ rounds algorithm for this task. Now we improve the bounds on the number of iterations required to bring down the number of terms from $O(d^{2-\epsilon_1})$ to $O(d^{2-\epsilon_2})$ mentioned in \cref{lem:spaa22}

% Let $t_1$ and $t_2$ be the number of rounds required to compute $C'$ and $C''$, respectively. There is a tradeoff of between $t_1$ and $t_2$. In this paper, we improve the bound on the number of in order to optimise the time of $d$-sparse matrix multiplication in any a distributed algorithm model, give that we have a square matrix multiplication in that model that we can use as a black-box in the computation of $C'$.

% Here, in one sequential iteration, we find at most $n/d$ triplets, such that each triplet contains a $d \times d$ sub-matrix in $A$, $B$, and $C$ where the sub-matrix in $C$ is the product of the sub-matrices of $A$ and $B$ and contains \emph{terms} yet to be computed. Finding at most $n/d$ such product sub-matrices helps us to compute the remaining \emph{terms} of $C$ efficiently using \textbf{Theorem 1}. We assign $d$ processors to each of these triplets, so that these processors can perform the required multiplication of $d \times d$ square sub-matrices in $O(\sqrt{d})$ communication rounds. Thus, each iteration effectively takes $O(\sqrt{d})$ communication rounds.

\subsection{Tight Bound for number of Iterations}

%$|T| \leq d^{2-\epsilon_1}n$ to $|T^`| \leq d^{2-\epsilon_2}n$.

\begin{lemma}
\label{lem:sparsemm-improved}
 Number of rounds required to bring down the number of uncomputed terms from $|T|\;=\; O(d^{2- \varepsilon_{1}}n)$ to $ |T'|\;=\; O(d^{2 - \varepsilon_{2}}n)$ is $l \;=\; O\!\bigl(d^{4\,\varepsilon_{2}}\bigr)$, where \(0 \le \varepsilon_1 < \varepsilon_2\).
\end{lemma}

\begin{proof}
According to \cref{lem:spaa22} it will require $O(d^{5\varepsilon_2-\varepsilon_1})$ iteration to go from $T$ to $T'$. Let us see what happens when we introduce an intermediate step in the reduction of terms from $T$ to $T'$, i.e. instead of directly going from $O(d^{2 - \varepsilon_{1}}n)$ to $O(d^{2 - \varepsilon_{2}}n)$. We first go from $O(d^{2 - \varepsilon_{1}}n)$ to $O(d^{2 - \varepsilon_{3}}n)$ and then go from $O(d^{2 - \varepsilon_{3}}n)$ to $O(d^{2 - \varepsilon_{2}}n)$ for some $\varepsilon_1< \varepsilon_3< \varepsilon_2$, using \cref{lem:spaa22}. And calculate the number of iterations consumed in this process. Let $\varepsilon_3 \;=\;\frac{\varepsilon_1 + \varepsilon_2}{2}$ such that
\[
|T|\;=\; O(d^{2 - \varepsilon_{1}}n)
\quad\text{,}\quad 
|T''|\;=\; O(d^{2 - \varepsilon_{3}}n)
\quad\text{, and }\quad
|T'|\;=\; O(d^{2 - \varepsilon_{2}}n).
\]

Let $l_1$ denote the number of iterations required to reduce the number of uncomputed terms from $T$ to $T''$, and similarly, $l_2$ denote the number of iterations required to reduce the number of uncomputed terms from $T''$ to $T'$. By \cref{lem:spaa22}, we have $l_1 = O(d^{5\varepsilon_3-\varepsilon_1})$ and $l_2 = O(d^{5\varepsilon_2-\varepsilon_3})$. Thus, the total number of iterations required to reduce the number of uncomputed terms from $T$ to $T'$ is
$l = l_1 + l_2 = O(d^{5\varepsilon_3-\varepsilon_1}) + O(d^{5\varepsilon_2-\varepsilon_3}) = O(d^{5\varepsilon_2-\varepsilon_3})$. Since $5\varepsilon_2-\varepsilon_3 > {5\varepsilon_3-\varepsilon_1}$, we get a strictly better bound than presented in \cref{lem:spaa22}.

Now the idea is that we recursively insert intermediate steps in this process i.e. insert $\varepsilon_4$ between $\varepsilon_3$ and $\varepsilon_2$ (such that $\varepsilon_4$ is the arithmetic mean of $\varepsilon_3$ and $\varepsilon_2$) and similarly $\varepsilon_5$ between $\varepsilon_4$ and $\varepsilon_2$ and do this $x$ number of times ( where $x$ is big constant), then, we can see that number of iterations required to reduce the number of \emph{terms} step by step would be $l = l_1 + l_2 + .. + l_{x+1}$, where $l_1$ is the number of iterations to go from $\varepsilon_1$ to $\varepsilon_3$, $l_2$ from $\varepsilon_3$ to $\varepsilon_4$, .., $l_{x+1}$ from $\varepsilon_{x+2}$ to $\varepsilon_2$. Observe that the order of $l$ is dominated by the term $l_{x+1}$. Thus, $l = O(d^{5\varepsilon_2-\varepsilon_{x+2}}) $. Therefore, by inserting enough intermediate steps (bounded by some constant) we can make  $\varepsilon_{x+2}$, infinitesimally $\varepsilon_{x+2}$, infinitesimally close to $\varepsilon_2$. Hence $l$ will converge to $$O(d^{5\varepsilon_2-\varepsilon_{x+2}}) \approx O(d^{4\varepsilon_2})$$
\end{proof}

\paragraph{Final Computation of terms:} As we mentioned, the multiplication of two $d$-sparse matrices can be done in two phases. In one phase, we iteratively use square matrix multiplication, and in the other phase, we use the brute-force algorithm. Our task is to compute $O(d^2n)$ terms. We will first reduce the number of uncomputed terms from $O(nd^2)$ to $O(nd^{2-\varepsilon})$ ($\varepsilon \geq 0$) by iteratively using square matrix multiplication, and then compute the rest of the terms using trivial computation. By \cref{lem:sparsemm-improved}, the number of iterations to reduce the number of uncomputed terms from $O(nd^2)$ to $O(nd^{2-\varepsilon})$ is  $O(d^{4\varepsilon})$, where each iteration takes $O(\sqrt{d})$ communication rounds. Thus the total time for phase one is $O(d^{4 \varepsilon+ \frac{1}{2}})$. After phase one, we are left with $O(nd^{2-\varepsilon})$ terms that we compute using a trivial brute force algorithm. We know that in one round, one processor can compute $O(d)$ terms (because it has $O(d)$ memory), therefore total rounds required to compute $O(nd^{2-\varepsilon})$ temrs will be $O(\frac{nd^{2-\varepsilon}}{nd}) = O(d^{1-\varepsilon})$. There is a tradeoff between the round complexity of these two phases, i.e. if we spend less time in phase one, it will take more time (polynomially) in phase two. Therefore, in order to optimise the total number of communication rounds needed, which is $O(d^{4\varepsilon+1/2})+O(d^{1-\varepsilon})$, should be minimised. Which happen for $\varepsilon=0.1$, taking $O(d^{0.9})$ communication rounds. Thus, we prove the following theorem.

\begin{theorem}\label{thm:sparse-constant-rounds}
Multiplication of two $d$-sparse matrices in the MPC model with \(n\) processors, each with \(\Theta(d)\) memory, can be performed in \(O(d^{0.9})\) rounds.
\end{theorem}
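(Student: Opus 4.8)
The plan is to split the computation of the $O(nd^2)$ \emph{terms} of $C$ into two phases governed by a single parameter $\epsilon \ge 0$: a \emph{structured} phase that invokes square matrix multiplication as a black box to eliminate the bulk of the terms, and a \emph{brute-force} phase that clears the residual with the trivial algorithm. The final exponent will be fixed by balancing the round costs of the two phases.

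In the structured phase I would apply \cref{lem:sparsemm-improved} with $\epsilon_1 = 0$ and $\epsilon_2 = \epsilon$ to bring the number of uncomputed terms down from $O(nd^2)$ to $O(nd^{2-\epsilon})$. By that lemma this needs only $l = O(d^{4\epsilon})$ iterations, where a single iteration computes one product $A_i \times B_i = \sum_j A_i^j * B_i^j$ over the $(n/d)$ clustered $d \times d$ submatrices guaranteed by \cref{lem:spaa22}. The key subclaim is that one iteration runs in $O(\sqrt{d})$ MPC rounds: each of the $(n/d)$ independent products $A_i^j * B_i^j$ is a $d \times d$ square multiplication, which I assign to its own group of $d$ processors each holding $O(d)$ memory; by the square-matrix bound (\cref{thm:densemmfinal}, with $n$ rescaled to $d$) every group finishes in $O(\sqrt{d})$ rounds, and since the groups are disjoint they execute in parallel. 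Multiplying the per-iteration cost by the iteration count gives $O(d^{4\epsilon}\cdot\sqrt{d}) = O(d^{4\epsilon + 1/2})$ rounds for this phase.

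For the brute-force phase, once only $O(nd^{2-\epsilon})$ terms remain, I would use the trivial algorithm: in each round the $n$ processors together compute $O(nd)$ terms (each processor fetches the $O(d)$ relevant entries of $A$ and $B$ and performs $d$ products within its $O(d)$ memory), so the residual is cleared in $O\!\bigl(nd^{2-\epsilon}/(nd)\bigr) = O(d^{1-\epsilon})$ rounds. Adding the two phases yields $O(d^{4\epsilon+1/2}) + O(d^{1-\epsilon})$ rounds, and optimizing by setting $4\epsilon + 1/2 = 1-\epsilon$ gives $\epsilon = 0.1$ and the claimed bound $O(d^{0.9})$.

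The main obstacle I anticipate is not the arithmetic balancing but verifying that a single structured iteration genuinely fits in $O(\sqrt{d})$ rounds under the tight $O(d)$-memory and $O(d)$-message constraints. Concretely, I must check that the clustered submatrices $A_i^j, B_i^j$ produced by \cref{lem:spaa22} can be gathered into their assigned processor groups by a routing that never delivers more than $O(d)$ elements to any processor in a round, and that the $(n/d)$ partial products summing to each block of $C'$ can be accumulated within budget; this last aggregation is exactly the setting of \cref{lem:logkt} and contributes only a lower-order additive round cost. The delicate point is confirming that the sparse initial layout (one row and one column per processor) is compatible with this black-box invocation without forcing an expensive global redistribution. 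Once that compatibility is established, the two-phase round count and the optimization over $\epsilon$ complete the proof.
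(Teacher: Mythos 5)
Your proposal matches the paper's proof essentially step for step: the same two-phase split (iterated square matrix multiplication via \cref{lem:sparsemm-improved} with $\epsilon_1=0,\ \epsilon_2=\epsilon$, then the trivial algorithm on the residual $O(nd^{2-\epsilon})$ terms), the same $O(\sqrt{d})$ cost per iteration by assigning $d$ processors to each of the $(n/d)$ parallel $d\times d$ block products, and the same balancing $O(d^{4\epsilon+1/2})+O(d^{1-\epsilon})$ optimized at $\epsilon=0.1$. The routing and aggregation concerns you flag are exactly the points the paper also treats (briefly), via \cref{lem:logkt} and the observation that the per-round message load stays $O(d)$, so your argument is correct and is the paper's argument.
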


\bibliography{main}
\end{document}